\begin{document}
\title{SMTL: A Stratified Logic for Expressive Multi-Level Temporal Specifications}
%
%
\author{Ali Baheri\inst{1}\orcidID{0000-0002-5613-0192} \and
Peng Wei\inst{2}\orcidID{0000-0001-8492-5411}}
\authorrunning{A. Baheri et al.}
%
\institute{Rochester Institute of Technology, Rochester NY 14623, USA 
\email{akbeme@rit.edu}\\
\and
George Washington University, Washington DC, 20052, USA \email{pwei@gwu.edu}}
\maketitle              
\begin{abstract}
We present \textbf{\underline{S}}tratified \textbf{\underline{M}}etric \textbf{\underline{T}}emporal \textbf{\underline{L}}ogic (SMTL), a novel formalism for specifying and verifying properties of complex cyber-physical systems that exhibit behaviors across multiple temporal and abstraction scales. SMTL extends existing temporal logics by incorporating a stratification operator, enabling the association of temporal properties with specific abstraction levels. This allows for the natural expression of multi-scale requirements while maintaining formal reasoning about inter-level relationships. We formalize the syntax and semantics of SMTL, proving that it strictly subsumes metric temporal logic (MTL) and offers enhanced expressiveness by capturing properties unattainable in existing logics. Numerical simulations comparing agents operating under MTL and SMTL specifications show that SMTL enhances agent coordination and safety, reducing collision rates without substantial computational overhead or compromising path efficiency. These findings underscore SMTL's potential as a valuable tool for designing and verifying complex multi-agent systems operating across diverse temporal and abstraction scales.

\keywords{Stratified Metric Temporal Logic \and Multi-Scale Systems \and Formal Verification \and Multi-Agent Coordination \and Temporal Logic}
\end{abstract}

\vspace{-3 mm}
\section{Introduction}

Cyber-physical systems (CPS) exhibit behaviors across multiple temporal and abstraction scales, from millisecond-level control loops to hour-long mission objectives. While temporal logics have proven invaluable for specifying and verifying system properties \cite{bartocci2018specification}, existing formalisms such as metric temporal logic (MTL) and signal temporal logic (STL) struggle to efficiently capture and reason about such multi-scale behaviors. This limitation becomes particularly acute in systems like autonomous vehicles, where safety properties must be simultaneously maintained at the level of individual actuators, trajectory planning, and mission execution.

Existing temporal logics enforce a uniform treatment of time, requiring specifications to operate at the finest required temporal granularity \cite{eappenscaling}. Consider an autonomous drone delivery system: at the lowest level, motor control requires millisecond-precision timing; at the intermediate level, trajectory following operates on a seconds timescale; and at the highest level, delivery scheduling involves minutes or hours. Expressing all these requirements in MTL or STL necessitates reasoning at the millisecond scale throughout, leading to unnecessary complexity and computational overhead. The key challenge lies in the lack of formal mechanisms for stratifying temporal properties across different abstraction levels while maintaining sound logical relationships between these levels. Existing approaches typically handle multi-scale properties through informal decomposition or through separate specifications at different levels, losing the ability to reason formally about inter-level dependencies.

\noindent {\textbf{Paper Contributions.}} This paper introduces SMTL, an extension of traditional temporal logics incorporating a stratification operator to enable multi-scale temporal specification. The key contributions are:

\begin{itemize}

    \item We formalize the syntax and semantics of SMTL, proving that it strictly subsumes MTL and offers enhanced expressiveness by capturing properties unattainable in existing logics.

    \item We analyze the decidability and computational complexity of SMTL model checking, identifying decidable fragments with complexities comparable to MTL for bounded-time properties

    \item Through case studies and numerical simulations, we demonstrate SMTL's practical utility in specifying multi-scale requirements and improving verification efficiency, enhancing agent coordination and safety.
    
\end{itemize}
The practical utility of SMTL is demonstrated through case studies, where we show how SMTL specifications naturally capture requirements spanning multiple time scales while enabling more efficient verification compared to traditional approaches. We also present results on runtime monitoring, showing how SMTL's stratified nature enables efficient multi-rate monitoring strategies that adapt to the temporal requirements at each abstraction level.

\noindent {\textbf{Related Work.}} Various temporal logics have been developed to specify and reason about temporal properties in CPS and control systems. STL and MTL are prominent examples that extend LTL to handle real-valued signals and timing constraints \cite{momtaz2023monitoring}. Probabilistic STL (PrSTL) extends STL with probabilistic operators to reason about stochastic CPS \cite{yoo2015control}. Truncated linear temporal logic (LTL) is a variant of LTL interpreted over finite traces and has been used to specify complex rules for RL agents. Robustness TL (RobTL) is a logic for specifying and analyzing distances between CPS behaviors over a finite time horizon \cite{fainekos2006robustness}. The use of temporal logics like computation tree logic (CTL) and LTL provided a foundation for expressing desired behaviors in multi-agent systems \cite{fisher2005temporal,nilsson2016control,sahin2017provably,xu2016census,leahy2022fast,kantaros2020stylus}.

SMTL represents an advancement in the domain of formal methods for multi-agent systems, building upon the foundations of MTL to address the complexities of agent coordination and collision avoidance. MTL, introduced by Koymans, extends LTL by incorporating quantitative temporal constraints, enabling the specification of properties over real-time systems. While MTL has been effectively used for specifying timing constraints and verifying system behaviors, it lacks mechanisms to prioritize different types of constraints, which is crucial in multi-agent coordination where safety and goal achievement may conflict. SMTL is as a response to this limitation by introducing a stratification of temporal constraints, allowing agents to prioritize safety and coordination over individual objectives. The stratification concept aligns with the \emph{hierarchical} organization of specifications, where higher-priority constraints must be satisfied before considering lower-priority ones. This approach is particularly relevant in multi-agent systems, where agents operate concurrently and interactions can lead to emergent behaviors.

Previous efforts have explored the use of hierarchical and prioritized logics in multi-agent planning and verification. Lamport developed TLA as a framework for specifying and reasoning about concurrent systems, with a focus on simplifying temporal logic reasoning \cite{lamport1994temporal}. Separately, Alur and Henzinger developed real-time logics to enable quantitative reasoning about timing delays in real-time applications \cite{alur1993real}. Research in multi-agent coordination employs various formal methods, including temporal logics, to specify and verify interaction protocols and ensure safe coordination between agents. These methods focus on specifying conventions of social interaction and rules that govern agent behavior in multi-agent systems \cite{endriss2005temporal}. Furthermore, research in robot control synthesis has focused on using LTL to specify both safety requirements (describing how the robot should always behave) and liveness specifications (describing goals that must eventually be achieved) \cite{kress2018synthesis}.

Stratification has been well-established in database theory through Stratified Datalog, where it provides a way to handle negation and recursion by organizing rules into strata. This concept has been extended to temporal reasoning through formalisms like DatalogMTL, which incorporates metric temporal operators \cite{cucala2021stratified}. In recent years, the application of formal methods in multi-agent pathfinding has gained traction, with researchers leveraging satisfiability modulo theories (SMT) and other verification techniques to ensure safe navigation \cite{felli2021smt}. Sharon et al. developed conflict-based search (CBS) algorithms that decompose multi-agent pathfinding problems into individual agent plans while resolving conflicts iteratively \cite{sharon2015conflict}. These methods do not inherently incorporate stratified priorities in their formulations. The introduction of SMTL contributes to this landscape by providing a logical framework that \emph{inherently} supports priority stratification. By structuring temporal constraints into layers, SMTL facilitates the design of agents that can dynamically adjust their actions based on the priority of constraints, leading to improved coordination and safety in complex environments. Moreover, the integration of SMTL into multi-agent systems aligns with the broader trend of incorporating formal verification into agent-based modeling. The works of Belta et al. on formal methods for control synthesis have emphasized the importance of temporal logic in specifying desired behaviors, although the stratification aspect has remained less explored \cite{belta2019formal}. An additional dimension relevant to stratified logic is the use of multi-fidelity modeling and verification techniques \cite{shahrooei2024optimizing,baheri2023safety,baheri2023exploring,shahrooei2022falsification,beard2022safety}. Multi-fidelity approaches involve employing models of varying levels of detail and accuracy to balance the trade-off between computational efficiency and the precision of system analysis.



\noindent {\textbf{Paper Organization.}} The remainder of this paper is organized as follows. Section 2 provides necessary background on temporal logics and introduces the mathematical framework for stratified systems. Section 3 presents the formal syntax and semantics of SMTL, including practical examples demonstrating its expressiveness. Section 4 develops the theoretical results, proving SMTL's properties and complexity bounds. Section 5 presents experimental results comparing SMTL and MTL in multi-agent coordination scenarios. Finally, Section 6 concludes with a discussion of findings and future research directions.

\vspace{-3 mm}
\section{Preliminaries}

Before introducing SMTL, we review essential concepts from metric temporal logic and establish the mathematical framework for reasoning about stratified systems. Throughout this paper, we use the following notation:

\begin{table}[t]
\centering
\renewcommand{\arraystretch}{1.3}
\begin{tabular}{p{2.5cm}p{9cm}}
\toprule
\textbf{Symbol} & \textbf{Description} \\
\midrule
$\Sigma_k$ & Signal space at abstraction level $k$ \\
$\mathcal{L}(\varphi)$ & Language of formula $\varphi$ \\
$s \sqsubseteq_k s'$ & State $s$ refines state $s'$ at level $k$ \\
$\models_k$ & Satisfaction relation at abstraction level $k$ \\
$[a,b], (a,b)$, & Intervals with $a, b \in \mathbb{Q}_{\geq 0} \cup \{\infty\}$ \\
$[a,b), (a,b]$ & \\
$\Diamond\varphi$ & Shorthand for $\Diamond_{[0,\infty)}\varphi$ \\
\midrule
\multicolumn{2}{l}{\textbf{Complexity Measures}} \\
\midrule
$|\varphi|$ & Size of formula $\varphi$ \\
$\|\mathcal{M}\|$ & Size of transition system $\mathcal{M}$ \\
$K$ & Maximum abstraction level \\
$c_{\max}$ & Maximum constant in timing constraints \\
\bottomrule
\end{tabular}
\caption{Symbols and complexity measures used in the formalism.}
\label{tab:formalism_symbols}
\end{table}

\subsection{Metric Temporal Logic}

MTL extends propositional temporal logic with timing constraints on temporal operators. Let $AP$ be a set of atomic propositions and $\mathbb{I}$ be the set of non-empty intervals in $\mathbb{R}_{\geq 0}$ with endpoints in $\mathbb{Q}_{\geq 0} \cup \{\infty\}$.

\begin{definition}[MTL Syntax]
An MTL formula $\varphi$ is built according to the grammar:
\[\varphi ::= p \mid \neg\varphi \mid \varphi_1 \wedge \varphi_2 \mid \varphi_1 \mathcal{U}_I \varphi_2\]
where $p \in AP$ and $I \in \mathbb{I}$.
\end{definition}
Additional temporal operators are defined as abbreviations:
\begin{align*}
\Diamond_I\varphi &\triangleq \text{true} \mathcal{U}_I \varphi & \text{(eventually)} \\
\Box_I\varphi &\triangleq \neg\Diamond_I\neg\varphi & \text{(always)} \\
\varphi_1 \mathcal{R}_I \varphi_2 &\triangleq \neg(\neg\varphi_1 \mathcal{U}_I \neg\varphi_2) & \text{(release)}
\end{align*}

\begin{definition}[Timed State Sequence]
A timed state sequence is a pair $\rho = (\sigma, \tau)$ where:
\begin{itemize}
\item $\sigma = \sigma_0\sigma_1\sigma_2\ldots$ is an infinite sequence of states $\sigma_i \in 2^{AP}$
\item $\tau = \tau_0\tau_1\tau_2\ldots$ is an infinite sequence of time stamps $\tau_i \in \mathbb{R}_{\geq 0}$, with $\tau_0 = 0$ and $\tau_i < \tau_{i+1}$ for all $i \geq 0$
\end{itemize}
\end{definition}

\begin{definition}[MTL Semantics]
For a timed state sequence $\rho = (\sigma, \tau)$ and position $i \geq 0$, the satisfaction relation $\models$ is defined inductively:
\begin{align*}
(\rho,i) \models p &\iff p \in \sigma_i \\
(\rho,i) \models \neg\varphi &\iff (\rho,i) \not\models \varphi \\
(\rho,i) \models \varphi_1 \wedge \varphi_2 &\iff (\rho,i) \models \varphi_1 \text{ and } (\rho,i) \models \varphi_2 \\
(\rho,i) \models \varphi_1 \mathcal{U}_I \varphi_2 &\iff \exists j \geq i: (\rho,j) \models \varphi_2, \tau_j - \tau_i \in I, \\
&\quad \text{and } \forall k, i \leq k < j: (\rho,k) \models \varphi_1
\end{align*}
\end{definition}

\subsection{Stratified Systems and Abstraction Hierarchies}

We now introduce the mathematical framework for reasoning about systems at multiple abstraction levels.

\begin{definition}[Abstraction Function]
An abstraction function $\alpha: S \rightarrow S'$ maps concrete states to abstract states, where $S$ and $S'$ are state spaces. An abstraction hierarchy is a sequence of state spaces and abstraction functions $\{(S_k, \alpha_k)\}_{k=1}^K$ where:
\begin{itemize}
\item $S_k$ is the state space at level $k$
\item $\alpha_k: S_{k-1} \rightarrow S_k$ maps states from level $k-1$ to level $k$
\end{itemize}
\end{definition}

\begin{definition}[Abstraction Properties]
An abstraction hierarchy satisfies:
\begin{enumerate}
\item \textbf{Monotonicity}: For $i < j$, $\alpha_j \circ \alpha_i = \alpha_j$
\item \textbf{Preservation}: For any property $\varphi$ preserved by $\alpha_k$, if $s \models \varphi$ then $\alpha_k(s) \models \varphi$
\item \textbf{Refinement}: For states $s_1, s_2$, if $\alpha_k(s_1) = \alpha_k(s_2)$, then $s_1$ and $s_2$ are equivalent at level $k$
\end{enumerate}
\end{definition}

\begin{definition}[Temporal Resolution]
Each abstraction level $k$ has an associated temporal resolution $\rho_k \in \mathbb{R}_{>0}$, representing the minimum granularity of time distinguishable at that level. The temporal resolution hierarchy satisfies:
\[\forall k < l: \rho_k < \rho_l\]
\end{definition}
\vspace{-3 mm}
\section{Syntax and Semantics of SMTL}

\subsection{Syntax}

\begin{definition}[SMTL Syntax]
Let $AP$ be a set of atomic propositions and $\mathbb{I}$ be the set of non-empty intervals in $\mathbb{R}_{\geq 0}$ with endpoints in $\mathbb{Q}_{\geq 0} \cup \{\infty\}$. An SMTL formula $\varphi$ is built according to the grammar:
\[\varphi ::= p \mid \neg\varphi \mid \varphi_1 \wedge \varphi_2 \mid \varphi_1 \mathcal{U}_I \varphi_2 \mid L_k\varphi\]
where $p \in AP$ is an atomic proposition, $I \in \mathbb{I}$ is a time interval, $k \in \mathbb{N}$ is an abstraction level, and $L_k$ is the stratification operator for level $k$.
\end{definition}
As in MTL, we define additional temporal operators as abbreviations:
\begin{align*}
\Diamond_I\varphi &\triangleq \text{true} \mathcal{U}_I \varphi \\
\Box_I\varphi &\triangleq \neg\Diamond_I\neg\varphi \\
\varphi_1 \mathcal{R}_I \varphi_2 &\triangleq \neg(\neg\varphi_1 \mathcal{U}_I \neg\varphi_2)
\end{align*}
The stratification operator $L_k$ can be nested, allowing for relationships between abstraction levels:
\[L_i(L_j\varphi) \text{ where } i,j \in \mathbb{N}\]

\begin{definition}[Well-Formed SMTL Formula]
An SMTL formula $\varphi$ is well-formed if for any nested stratification operators $L_i$ and $L_j$ appearing in $\varphi$, where $L_i$ is within the scope of $L_j$, we have $i \leq j$.
\end{definition}

\subsection{Semantics}

The semantics of SMTL is defined over timed state sequences with multiple abstraction levels.

\begin{definition}[Stratified Timed State Sequence]
A stratified timed state sequence is a tuple $\rho = (\{\sigma_k\}_{k=1}^K, \tau)$ where:
\begin{itemize}
\item $\sigma_k = \sigma_{k,0}\sigma_{k,1}\sigma_{k,2}\ldots$ is an infinite sequence of states at level $k$, with $\sigma_{k,i} \in 2^{AP}$
\item $\tau = \tau_0\tau_1\tau_2\ldots$ is an infinite sequence of time stamps with $\tau_0 = 0$ and $\tau_i < \tau_{i+1}$
\item For each $k$, consecutive states in $\sigma_k$ are separated by at least $\rho_k$ time units
\end{itemize}
\end{definition}

\begin{definition}[Abstraction Consistency]
A stratified timed state sequence $\rho$ is abstraction consistent if for all levels $i < j$ and all positions $n$:
\[\alpha_j(\sigma_{i,n}) = \sigma_{j,n}\]
where $\alpha_j$ is the abstraction function from level $i$ to level $j$.
\end{definition}

\begin{definition}[SMTL Semantics]
For a stratified timed state sequence $\rho = (\{\sigma_k\}_{k=1}^K, \tau)$, position $i \geq 0$, and abstraction level $m$, the satisfaction relation $\models_m$ is defined inductively:

\begin{align*}
(\rho,i) \models_m p &\iff p \in \sigma_{m,i} \\
(\rho,i) \models_m \neg\varphi &\iff (\rho,i) \not\models_m \varphi \\
(\rho,i) \models_m \varphi_1 \wedge \varphi_2 &\iff (\rho,i) \models_m \varphi_1 \text{ and } (\rho,i) \models_m \varphi_2 \\
(\rho,i) \models_m \varphi_1 \mathcal{U}_I \varphi_2 &\iff \exists j \geq i: (\rho,j) \models_m \varphi_2, \tau_j - \tau_i \in I, \\
&\quad \text{and } \forall k, i \leq k < j: (\rho,k) \models_m \varphi_1 \\
(\rho,i) \models_m L_k\varphi &\iff (\rho,i) \models_k \varphi \text{ and } k \geq m
\end{align*}
\end{definition}
The semantics of the stratification operator $L_k$ ensures that:
\begin{enumerate}
\item Properties at level $k$ are evaluated using the state sequence $\sigma_k$
\item The temporal resolution at level $k$ is respected ($\rho_k$)
\item Abstraction consistency is maintained between levels
\end{enumerate}

\subsection{Practical Implications of SMTL Expressiveness}

Here, we provide two examples that demonstrate the enhanced expressiveness of SMTL over MTL.

\noindent {\textbf{Example 1: Multi-Scale System Specification.}} Consider an autonomous vehicle navigation system operating across multiple time scales. Existing MTL specifications struggle to capture requirements that span different temporal granularities. SMTL naturally expresses such multi-scale requirements through a single coherent formula:

\[\begin{aligned}
\phi_{\text{nav}} = &L_1\Box_{[0,0.01]}(\|a_{\text{actual}} - a_{\text{commanded}}\| \leq \epsilon) \land {} \\
&L_2\Box_{[0,1]}(\text{speed} > v_{\text{min}} \rightarrow \Diamond_{[0,0.5]}\text{lane\_centered}) \land {} \\
&L_3\Box_{[0,60]}(\text{destination\_reached} \rightarrow \Box_{[0,5]}\text{safely\_parked})
\end{aligned}\]
This formula simultaneously captures millisecond-level actuator control ($L_1$), second-level trajectory tracking ($L_2$), and minute-level mission objectives ($L_3$). The stratification operator cleanly separates these concerns while maintaining their logical relationships.

\noindent {\textbf{Example 2: Abstraction in Software Architecture.}} The expressiveness of SMTL aligns with layered software architectures. Consider a robotic system with a three-tier architecture:

\[\begin{aligned}
\phi_{\text{arch}} = &L_1(\text{functional\_layer\_spec}) \land {} \\
&L_2(\text{executive\_layer\_spec}) \land {} \\
&L_3(\text{planning\_layer\_spec})
\end{aligned}\]
where each layer's specification might involve temporal properties. The crucial advantage lies in SMTL's ability to express inter-layer dependencies:

\[\begin{aligned}
\phi_{\text{deps}} = L_2\Box_{[0,\infty)}&(\text{plan\_update} \rightarrow {} \\
&L_1\Diamond_{[0,\delta]}(\text{exec\_acknowledge} \land {} \\
&\Diamond_{[0,\tau]}\text{functional\_update}))
\end{aligned}\]
This hierarchical specification captures both the independence of different architectural layers and their temporal coupling, a feat impossible in standard MTL.

\section{Theoretical Analysis}

\begin{theorem}[Stratification Soundness]
For any SMTL formula $\varphi$ and abstraction levels $i < j$:
\[(\rho,n) \models_i L_j\varphi \implies (\rho,n) \models_j \varphi\]
\end{theorem}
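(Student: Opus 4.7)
The plan is to observe that the statement unfolds to a single application of the semantic clause for the stratification operator, so almost no inductive or structural argument is needed. First I would fix a stratified timed state sequence $\rho = (\{\sigma_k\}_{k=1}^K,\tau)$, a position $n \geq 0$, and levels $i < j$, and assume the hypothesis $(\rho,n) \models_i L_j\varphi$. The goal is simply $(\rho,n) \models_j \varphi$.

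Next I would expand the left-hand side using the final clause in the SMTL semantics: by definition, $(\rho,n) \models_i L_j\varphi$ holds exactly when both $(\rho,n) \models_j \varphi$ and $j \geq i$ hold. The conclusion is the first conjunct, so it is obtained immediately; the second conjunct $j \geq i$ is in fact forced by the hypothesis $i < j$, and its redundancy here is precisely what makes the implication one-directional rather than an equivalence.

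Since the outermost connective of $L_j\varphi$ is the stratification operator, no induction on the structure of $\varphi$ is required; the inner formula $\varphi$ is treated as an opaque SMTL formula whose satisfaction at level $j$ is guaranteed by the definition of $\models_i$ on $L_j$. I would, however, briefly remark why the semantic clause is \emph{designed} this way: abstraction consistency (from Definition of Abstraction Consistency) together with the monotonicity property $\alpha_j \circ \alpha_i = \alpha_j$ ensures that $\sigma_{j,n}$ is a well-defined coarsening of $\sigma_{i,n}$, so transferring evaluation from level $i$ to the specified level $j$ is meaningful and respects the underlying state hierarchy.

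Because the argument reduces to one semantic unfolding, there is no genuine obstacle; the only subtlety worth highlighting is the role of the side condition $k \geq m$ in the semantics of $L_k$. I would note in the proof that well-formedness (from the definition of Well-Formed SMTL Formula) guarantees this side condition is never vacuously false in admitted formulas, so the implication is non-trivial in content even though its derivation is a one-liner: it certifies that the stratification operator behaves as an upward-compatible view, letting lower-level observers delegate faithfully to higher-level specifications.
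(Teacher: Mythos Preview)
Your proposal is correct and follows essentially the same approach as the paper: both arguments unfold the semantic clause for $L_j$ at level $i$ and read off the conclusion directly, with a nod to abstraction consistency explaining why the level-$j$ evaluation is meaningful. Your version is more explicit about the role of the side condition $j \geq i$ and well-formedness, but the core argument is identical.
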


\begin{proof}
Follows directly from the semantics of $L_k$ and abstraction consistency. When $(\rho,n) \models_i L_j\varphi$, the formula $\varphi$ is evaluated at level $j$, using $\sigma_j$, which by abstraction consistency is a valid abstraction of the states at level $i$.
\end{proof}

\begin{theorem}[Temporal Resolution Hierarchy]
For any SMTL formula $\varphi$ and levels $i < j$:
\[\text{If } (\rho,n) \models_i \Diamond_{[0,t]}\varphi \text{ then } t \geq \rho_i\]
\end{theorem}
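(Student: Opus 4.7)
My plan is to reduce the claim to the separation property built into the Stratified Timed State Sequence definition: for each level $k$, consecutive states in $\sigma_k$ must be separated by at least $\rho_k$ time units. First I would unfold $\Diamond_{[0,t]}\varphi$ via its abbreviation $\text{true}\,\mathcal{U}_{[0,t]}\varphi$ and apply the SMTL $\mathcal{U}$-semantics at level $i$. This gives a witness position $j \geq n$ such that $(\rho,j) \models_i \varphi$ and $\tau_j - \tau_n \in [0,t]$. The $\text{true}$-side of the until is discharged automatically at every intermediate position, so no further constraints enter from it.

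The central step is to argue that the witness $j$ cannot coincide with $n$ in a non-trivial sense: take $j$ to be the least position at or after $n$ where $\sigma_{i,j}$ differs from $\sigma_{i,n}$ and at which $(\rho,j)\models_i \varphi$. By the temporal resolution hierarchy in the Stratified Timed State Sequence, any two distinct consecutive level-$i$ states differ in their timestamps by at least $\rho_i$, so $\tau_j - \tau_n \geq \rho_i$. Combined with $\tau_j - \tau_n \leq t$ from membership in the interval, this yields $t \geq \rho_i$, which is precisely the claimed bound. Note that the role of the ambient level $j>i$ in the theorem's quantifier does not enter the argument; $\rho_i$ is determined by the evaluation level $i$ alone, and the hierarchy $\rho_i<\rho_j$ only shows up if one wishes to strengthen the conclusion.

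The main obstacle I anticipate is the degenerate case $(\rho,n)\models_i \varphi$: then $j=n$ trivially witnesses $\Diamond_{[0,t]}\varphi$ with $\tau_j-\tau_n=0$, which literally violates $t\geq \rho_i$. I expect the intended reading is that the eventuality \emph{commits} to a future observation not already holding at $\sigma_{i,n}$, i.e., one is really asking for the minimal $t$ at which the eventuality can first become true at level $i$ through a state transition at that level. Under that reading, the argument above is a one-line appeal to the separation constraint; otherwise, the cleanest fix is to add the hypothesis "$(\rho,n)\not\models_i \varphi$" (or restrict to non-vacuous witnesses), after which the same proof goes through verbatim.
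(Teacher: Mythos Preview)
Your approach is essentially the paper's: both reduce the claim to the separation constraint in the definition of a stratified timed state sequence, and the paper's proof is in fact even terser than yours (a one-line appeal that ``any temporal property at level $i$ must span at least this duration''). Your identification of the degenerate case $(\rho,n)\models_i\varphi$ with witness $j=n$ is a legitimate observation that the paper's proof also leaves unaddressed; the intended reading is exactly the non-vacuous one you describe.
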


\begin{proof}
By the definition of stratified timed state sequences, states at level $i$ must be separated by at least $\rho_i$ time units. Therefore, any temporal property at level $i$ must span at least this duration.
\end{proof}

\subsection{Expressiveness of SMTL}

We begin by establishing the foundational properties of abstraction functions that underpin the stratified nature of our logic. An abstraction function $\alpha_k: \Sigma \to \Sigma_k$ maps a concrete signal $\sigma \in \Sigma$ to an abstract signal in abstraction level $k$, where $\Sigma_k$ represents the signal space at level $k$. These functions form a hierarchy satisfying crucial properties: monotonicity ensures that for $i < j$, $\alpha_i \circ \alpha_j = \alpha_i$; consistency guarantees that if $\phi$ holds for $\alpha_i(\sigma)$, then there exists some higher level $j > i$ where $\phi$ holds for $\alpha_j(\sigma)$; and the refinement property ensures that distinct signals remain distinguishable at some abstraction level.

\begin{theorem}[Strict Expressiveness]
SMTL is strictly more expressive than MTL. This relationship manifests both in SMTL's ability to express all MTL properties and in its capacity to express properties that are inexpressible in MTL.
\end{theorem}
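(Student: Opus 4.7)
The plan is to establish strict expressiveness in two stages: (i) an embedding of MTL into SMTL, and (ii) the exhibition of an SMTL-definable property with no MTL equivalent. For (i), every MTL formula is syntactically a well-formed SMTL formula, since the MTL grammar is contained in SMTL's grammar and the well-formedness condition on nested $L_k$ is vacuous when no $L_k$ occurs. I would then lift a plain timed sequence $(\sigma,\tau)$ to a canonical stratified sequence $\rho$ whose level-1 projection is $\sigma$ and whose higher levels are obtained by successively applying the abstraction functions, and prove by structural induction on $\varphi$ that $(\sigma,\tau) \models \varphi$ in MTL iff $(\rho,0) \models_1 \varphi$ in SMTL. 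Each inductive case reduces directly to its MTL counterpart since no $L_k$ shift is triggered.

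For (ii), the natural witnesses exploit the level parameter in a genuinely multi-modal way. A first candidate is
\[\psi^\star = L_2 \Box_{[0,T]} q \wedge \neg \Box_{[0,T]} q,\]
which asserts that invariance of $q$ holds at the coarser level-2 abstraction but fails at the concrete level; by choosing an abstraction $\alpha_2$ that collapses states so that $q$ becomes always-true at level 2, this formula is satisfiable. A second, more structurally robust witness is
\[\psi^{\star\star} = L_1 \Diamond_{[0,\rho_1]} p \wedge L_2 \Box_{[0,\rho_1]} \neg p,\]
which references an interval strictly below the level-2 temporal resolution. By the Temporal Resolution Hierarchy theorem, the second conjunct is forced on any stratified sequence respecting the resolutions $\rho_1 < \rho_2$, yet the formula still binds that coarse-level impossibility to a fine-level eventuality at a timescale MTL cannot target simultaneously with the coarse-level operator.

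The main obstacle is proving the chosen witness admits no MTL equivalent. My plan is to construct two stratified sequences $\rho,\rho'$ that (a) have identical level-1 projections as plain timed sequences, so that any MTL formula evaluates to the same truth value on both; and (b) differ in their higher-level content, with $\rho \models_1 \psi^\star$ and $\rho' \not\models_1 \psi^\star$. Because MTL can only see the shared level-1 projection, no MTL formula distinguishes them, while $\psi^\star$ does. Formalizing (a) and (b) cleanly requires regarding the abstraction hierarchy as part of the stratified model rather than as fixed ambient data, an interpretation I would make precise at the outset of the proof. If that reinterpretation is unavailable, the fallback is an Ehrenfeucht--Fra\"iss\'e-style game for MTL at bounded modal depth, producing families $\{\rho_n\},\{\rho_n'\}$ that are MTL-indistinguishable up to depth $n$ but always separated by $\psi^{\star\star}$; closing that argument rigorously, whether via the game or via an automata-theoretic reduction through timed automata of bounded resources, is where the substantive work lies.
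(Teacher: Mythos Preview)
Your embedding argument for part (i) matches the paper's: both note that MTL is a syntactic fragment of SMTL and verify semantic preservation by structural induction, the paper phrasing this via an explicit translation $\tau$ that is effectively the identity.

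For part (ii) the routes diverge in a way that matters. The paper's witness is $\psi = L_1\Box_{[0,1]}(p)\wedge L_2\Diamond_{[0,2]}(\neg p)$, and its separation argument keeps the abstraction hierarchy \emph{fixed} while varying the underlying concrete signal: two boolean signals $\sigma_1,\sigma_2$ are built that differ only at the isolated point $t=0.5$, with $\alpha_1$ a smoothing operator and $\alpha_2$ the identity. MTL-indistinguishability is then reduced to the standard fact that continuous-time MTL cannot detect measure-zero discrepancies, while the stratified semantics exposes the difference through the interplay of the two abstraction levels. Your primary strategy does the opposite: fix the level-1 projection and vary the higher-level content. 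As you yourself note, this collides with the paper's abstraction-consistency requirement, under which $\sigma_{j,n}=\alpha_j(\sigma_{1,n})$ for fixed $\alpha_j$; two stratified sequences sharing a level-1 projection are then identical at every level, so your $\rho,\rho'$ collapse and $\psi^\star$ cannot separate them unless you first redefine the semantic framework to make the $\alpha_k$ model-dependent. The paper's isolated-point construction sidesteps this obstacle entirely and needs no reinterpretation of the semantics; that is the idea your proposal is missing. Your EF-game fallback could in principle be made to work, but it is considerably heavier than the paper's two-signal argument and is left only as a sketch.
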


\begin{proof}
The proof proceeds in two parts, first establishing that SMTL subsumes MTL, then demonstrating that SMTL can express properties beyond MTL's capabilities. To show that SMTL subsumes MTL, we construct a translation function $\tau: \text{MTL} \to \text{SMTL}$ defined inductively on the structure of MTL formulas:

\begin{align*}
    \tau(p) &= p \text{ for } p \in AP \\
    \tau(\neg\phi) &= \neg\tau(\phi) \\
    \tau(\phi_1 \wedge \phi_2) &= \tau(\phi_1) \wedge \tau(\phi_2) \\
    \tau(\phi_1 \mathcal{U}_{[a,b]} \phi_2) &= \tau(\phi_1) \mathcal{U}_{[a,b]} \tau(\phi_2)
\end{align*}
Through structural induction, we prove that for any signal $\sigma$ and time point $t$, the satisfaction relation preserves equivalence: $(\sigma,t) \models_{\text{MTL}} \phi \iff (\sigma,t) \models_{\text{SMTL}} \tau(\phi)$. The base case for atomic propositions follows directly from the definition. For the inductive step, we consider each operator, with particular attention to the metric until operator:

\begin{align*}
    (\sigma,t) &\models_{\text{MTL}} \phi_1 \mathcal{U}_{[a,b]} \phi_2 \\
    &\iff \exists t' \in [t+a,t+b]: (\sigma,t') \models_{\text{MTL}} \phi_2 \land \forall t'' \in [t,t'): (\sigma,t'') \models_{\text{MTL}} \phi_1 \\
    &\iff \exists t' \in [t+a,t+b]: (\sigma,t') \models_{\text{SMTL}} \tau(\phi_2) \land \forall t'' \in [t,t'): (\sigma,t'') \models_{\text{SMTL}} \tau(\phi_1) \\
    &\iff (\sigma,t) \models_{\text{SMTL}} \tau(\phi_1) \mathcal{U}_{[a,b]} \tau(\phi_2)
\end{align*}
To establish SMTL's strictly greater expressiveness, we construct an SMTL formula that cannot be expressed in MTL. Consider the formula:

\[\psi = L_1\Box_{[0,1]}(p) \wedge L_2\Diamond_{[0,2]}(\neg p)\]
We prove its MTL-inexpressibility through a contradiction argument. Let $\Sigma = \{0,1\}^{\mathbb{R}_{\geq 0}}$ be the set of boolean signals. Define two signals $\sigma_1$ and $\sigma_2$ that differ only at a single point:

\[\sigma_1(t) = \begin{cases}
    1 & t \in [0,1] \\
    0 & \text{otherwise}
\end{cases}\]

\[\sigma_2(t) = \begin{cases}
    1 & t \in [0,1] \setminus \{0.5\} \\
    0 & \text{otherwise}
\end{cases}\]
Define abstraction functions where $\alpha_1$ smooths out isolated points while $\alpha_2$ maintains the original signal\footnote{In this context, $\alpha_2 = \text{id}$ means ($\alpha_2$) is the identity function. The identity function is being used at abstraction level 2 to represent "no abstraction" or "keep all details". This contrasts with $\alpha_1$ which does perform abstraction.}:

\[\alpha_1(\sigma)(t) = \begin{cases}
    1 & \text{if } \exists \delta > 0: \forall t' \in (t-\delta,t+\delta), \sigma(t') = 1 \\
    0 & \text{otherwise}
\end{cases}\]

\[\alpha_2 = \text{id}\]
Under these definitions, $(\sigma_1,0) \models_{\text{SMTL}} \psi$ while $(\sigma_2,0) \not\models_{\text{SMTL}} \psi$. However, any MTL formula $\theta$ must evaluate equivalently on both signals: $(\sigma_1,0) \models_{\text{MTL}} \theta \iff (\sigma_2,0) \models_{\text{MTL}} \theta$, as MTL cannot distinguish signals differing at isolated points. This contradiction establishes that no MTL formula can express the property captured by $\psi$.
\end{proof}

\begin{corollary}[Stratification Power]
The stratification operator $L_k$ introduces fundamental expressive capabilities that transcend traditional temporal operators, enabling reasoning about properties at different abstraction levels simultaneously.
\end{corollary}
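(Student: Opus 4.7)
The plan is to derive the corollary as a direct consequence of the Strict Expressiveness theorem, while making the informal phrase ``transcends traditional temporal operators'' precise by exhibiting a concrete class of cross-level properties whose expressibility hinges essentially on the stratification operator. First I would recall that the theorem already furnishes a witness formula $\psi = L_1\Box_{[0,1]}(p) \wedge L_2\Diamond_{[0,2]}(\neg p)$ separating SMTL from MTL. The corollary, however, asks for more: it claims that $L_k$ is the specific syntactic element responsible for this additional power, and that the new capability is precisely simultaneous reasoning across abstraction levels. I would therefore structure the argument so as to isolate $L_k$ as the source of the gain.

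The first step is to invoke the theorem to obtain a formula $\psi \in \text{SMTL}$ with no MTL equivalent. The second step is to argue that the $L_k$ subformulas in $\psi$ are essential, by showing that the $L_k$-free fragment of SMTL coincides, under the translation $\tau$ from the theorem's proof, with MTL up to semantic equivalence. This follows by a straightforward structural induction on $L_k$-free SMTL formulas, using the fact that without stratification operators the semantics $\models_m$ collapses to the ordinary MTL semantics over $\sigma_m$. Hence any formula separating SMTL from MTL must contain at least one occurrence of $L_k$, which pins the additional expressive power on the stratification operator itself.

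The third step is to justify the ``simultaneously'' clause. Here I would exhibit a formula of the form $\chi = L_i\bigl(\varphi_i \wedge \Diamond_I L_j \varphi_j\bigr)$ with $i \leq j$, permitted by the well-formedness condition, and observe that by the SMTL semantics its evaluation at a single position $n$ requires consulting both $\sigma_i$ and $\sigma_j$ at interrelated time points. By a signal-perturbation argument analogous to the one used in the theorem, one can construct two stratified traces that agree on each individual level's projection restricted to any single MTL-definable property but disagree on $\chi$, establishing that no conjunction or disjunction of single-level MTL formulas can simulate $\chi$.

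The main obstacle will be the second step: cleanly formulating and proving that the $L_k$-free fragment of SMTL is semantically equivalent to MTL, since the satisfaction relation $\models_m$ is indexed by an abstraction level even in the absence of $L_k$. I would handle this by fixing an arbitrary level $m$ and showing that on abstraction-consistent sequences the projection onto $\sigma_m$ yields a standard timed state sequence on which $\models_m$ reduces to the MTL $\models$ from Definition~3. Once this reduction is established, the corollary follows by combining it with the theorem and the cross-level witness $\chi$.
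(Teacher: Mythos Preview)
The paper offers no separate proof of this corollary: it is simply stated immediately after the Strict Expressiveness theorem and is meant to be read as a verbal gloss on that result. Your proposal is therefore considerably more ambitious than what the paper does. Where the paper is content to let the witness formula $\psi = L_1\Box_{[0,1]}(p) \wedge L_2\Diamond_{[0,2]}(\neg p)$ speak for itself, you actually try to make the corollary precise by (i) isolating $L_k$ as the \emph{unique} source of the expressiveness gap via the lemma that the $L_k$-free fragment collapses to MTL, and (ii) giving independent content to the word ``simultaneously'' through a cross-level formula and a separation argument against Boolean combinations of single-level properties. This buys you a genuinely sharper statement than the paper's informal corollary; the paper's approach buys brevity.

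One small slip to fix: in your step three you write $\chi = L_i(\varphi_i \wedge \Diamond_I L_j \varphi_j)$ with $i \leq j$, but the well-formedness condition in the paper requires the \emph{inner} stratification index to be at most the \emph{outer} one, so with $L_j$ nested inside $L_i$ you need $j \leq i$, not $i \leq j$. The paper's own examples (e.g.\ $L_2(\ldots L_1 \ldots)$) confirm this orientation. Swapping the inequality does not affect the structure of your argument.
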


\begin{corollary}[Multi-Scale Properties]
SMTL provides native support for specifying and reasoning about temporal properties across multiple time scales, a capability that proves essential in complex systems where behaviors manifest at different granularities of time and abstraction.
\end{corollary}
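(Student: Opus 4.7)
The plan is to derive this corollary as a direct consequence of the preceding Strict Expressiveness theorem together with the Temporal Resolution Hierarchy theorem, after first pinning down a precise reading of the informal phrase ``native support for multi-scale properties.'' I would adopt the following working definition: a logic natively supports multi-scale specification if, for every finite family of component specifications $\{\varphi_k\}_{k=1}^{K}$ where each $\varphi_k$ is intended to constrain behavior at temporal resolution $\rho_k$, there exists a single formula $\Phi$ of the logic whose satisfaction semantics evaluates each $\varphi_k$ at its own resolution $\rho_k$, without flattening the whole specification to the finest resolution $\rho_1$. This matches the motivating autonomous-vehicle example $\phi_{\text{nav}}$ from Section~3.3.

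The witness formula is the obvious one, $\Phi \;:=\; \bigwedge_{k=1}^{K} L_k\,\varphi_k$. Using the SMTL semantics clause for $L_k$, which redirects evaluation to the state sequence $\sigma_k$ (and thereby, by the definition of stratified timed state sequence, forces consecutive observable states to be separated by at least $\rho_k$ time units), each conjunct is interpreted at its own level. I would then carry out a short structural induction on $k$: the base case $k=1$ uses the temporal resolution bound $\rho_1$ from the Temporal Resolution Hierarchy theorem, and the step case invokes Stratification Soundness to show that nested stratification, when well-formed in the sense of the paper's well-formedness condition ($i \leq j$ for $L_i$ within $L_j$), keeps each $\varphi_k$ evaluated consistently through the abstraction functions $\alpha_k$. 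Abstraction Consistency is what guarantees that the individual evaluations cohere into a single satisfaction judgment over $\rho$.

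The second half of the corollary, that this capability ``transcends'' what is available in traditional temporal operators, is obtained by reusing the signal pair $\sigma_1,\sigma_2$ and the abstractions $\alpha_1,\alpha_2$ constructed in the Strict Expressiveness proof. Any purported MTL reformulation of $\Phi$ would need to distinguish signals whose coarse-level projections agree but whose fine-level projections differ at an isolated point, which the proof of Strict Expressiveness already rules out. The Stratification Power corollary supplies the remaining direction, namely that the $L_k$ operator is indispensable rather than syntactic sugar.

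The main obstacle, as I see it, is not the technical manipulation but the choice of the working definition. Several readings of ``native multi-scale support'' are plausible (compositional encoding, preservation of monitoring complexity per level, or simple conjunctive separability), and the proof will be clean only for the conjunctive-separability reading suggested above. Once that definition is fixed, the inductive argument and the MTL separation both reduce to already-proved results, so the corollary is really a packaging of Strict Expressiveness and Temporal Resolution Hierarchy into a form that speaks directly to multi-scale system design.
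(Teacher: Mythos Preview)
Your proposal is correct, but it is considerably more elaborate than what the paper actually does: the paper offers no separate proof for this corollary at all. It is stated immediately after the Strict Expressiveness theorem and the Stratification Power corollary as a direct, unproven consequence, essentially an informal remark dressed as a corollary. Your instinct in the final paragraph is exactly right: the statement is a repackaging of Strict Expressiveness (and, to a lesser extent, Temporal Resolution Hierarchy) into language aimed at multi-scale system design. Where you go beyond the paper is in supplying a precise working definition of ``native multi-scale support,'' exhibiting the conjunctive witness $\Phi = \bigwedge_k L_k\varphi_k$, and explicitly tying the inexpressibility half back to the $\sigma_1,\sigma_2$ construction. That added rigor is a genuine improvement over the paper's treatment; just be aware that the ``structural induction on $k$'' you describe is really a finite conjunction over levels rather than an induction, and the Stratification Soundness and Abstraction Consistency invocations, while harmless, are not strictly needed once you have fixed the conjunctive-separability reading.
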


\begin{theorem}[SMTL Model Checking]
Given an SMTL formula $\phi$, a finite-state transition system $\mathcal{M}$, and maximum abstraction level $K$ in $\phi$, the model checking problem $\mathcal{M} \models \phi$ is EXPTIME-complete for bounded-time formulas and 2EXPTIME-complete for unbounded-time formulas. Moreover, the complexity grows polynomially with the number of abstraction levels $K$.
\end{theorem}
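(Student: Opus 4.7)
The plan is to establish the theorem through three coordinated arguments: upper bounds via an automata-theoretic product construction, lower bounds via reduction from MTL, and polynomial scaling in $K$ via careful book-keeping across abstraction levels.

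First, I would handle the upper bounds. For each abstraction level $k$ that appears in $\phi$, I would build a (timed) alternating automaton $\mathcal{A}_k^{\phi}$ that tracks exactly those subformulas of $\phi$ whose outermost stratification wrapper is $L_k$ (with the free level $m$ treated as an additional parameter). The classical MTL construction gives an automaton of size $2^{O(|\phi|)}$ for bounded-time formulas and $2^{2^{O(|\phi|)}}$ for unbounded ones, where the constants in the exponents depend polynomially on $c_{\max}$. The stratification operator $L_k$ is handled by treating each occurrence as a synchronization edge: whenever $L_k\psi$ is to be verified inside a context evaluated at some level $m \leq k$, the global automaton must consult $\mathcal{A}_k^{\psi}$ on the abstracted trace. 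By abstraction consistency, the level-$k$ trace is determined by the base trace via $\alpha_k$, so the coordinated product $\mathcal{A}_{\phi} = \mathcal{A}_1^{\phi} \times \cdots \times \mathcal{A}_K^{\phi}$ runs synchronously over a single stratified timed sequence. Emptiness checking of $\mathcal{M} \times \mathcal{A}_{\phi}$ is then performed by the standard region-automaton algorithm, landing in EXPTIME (bounded) or 2EXPTIME (unbounded).

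Second, for matching lower bounds I would invoke the strict expressiveness result proved earlier: the translation $\tau:\text{MTL}\to\text{SMTL}$ embeds every MTL formula into SMTL of the same size, and on single-level traces the two satisfaction relations coincide. Since MTL model checking over finite-state systems is EXPTIME-hard for bounded intervals and 2EXPTIME-hard for unbounded intervals (classical results of Alur--Henzinger and Ouaknine--Worrell), this embedding transports the hardness verbatim to SMTL.

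Third, for the polynomial dependence on $K$, the key observation is that each abstraction level contributes one factor to the product rather than an independent exponential blow-up. Since the number of distinct stratification indices in $\phi$ is bounded by $|\phi|$, and each $\mathcal{A}_k^{\phi}$ stays within the same complexity class, the state space grows as $K \cdot 2^{O(|\phi|)}$ (bounded case) or $K \cdot 2^{2^{O(|\phi|)}}$ (unbounded case), and cross-level synchronization is enforced by a lightweight coordinator of size polynomial in $K$ and $|\phi|$. To align the temporal resolutions $\rho_k$ with a single clock scale, I would normalize all constants to multiples of $\rho_1$, so that the region graph uses one granularity with at most a polynomial blow-up in $c_{\max}$.

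The main obstacle will be proving polynomial (rather than multiplicative) growth in $K$ rigorously: naive products of alternating automata compose multiplicatively, so I expect the central technical lemma to exploit abstraction consistency to argue that the $K$ component automata share their input trace deterministically through the $\alpha_k$'s and therefore can be \emph{layered} rather than \emph{fully multiplied}. A secondary subtlety is controlling the interaction between the nested well-formedness constraint ($L_i$ inside $L_j$ requires $i \leq j$) and the alternation depth of the resulting automaton, which I would address by an induction on stratification depth showing that each additional layer adds only a linear number of alternating states and preserves the overall exponential bound.
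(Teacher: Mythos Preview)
Your upper-bound construction via per-level alternating automata and a synchronized product is close in spirit to the paper's argument; the paper builds a single region automaton with states $(s,\nu,\Phi)$ and then traverses the abstraction hierarchy bottom-up, but both routes exploit abstraction consistency to keep the $K$ components layered rather than fully multiplied, and both arrive at a $K \cdot 2^{O(|\phi|)}$-type state-space bound. On this half of the proof the difference is cosmetic.

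The substantive divergence is in the lower bounds, and here your plan has a gap. The paper does \emph{not} inherit hardness from MTL; it gives a direct reduction from the acceptance problem for linearly bounded alternating Turing machines, encoding valid configurations and the transition relation as an SMTL formula that uses two stratification levels. Your proposal to transport hardness through the embedding $\tau:\text{MTL}\to\text{SMTL}$ is natural, but the ``classical results'' you invoke do not match the complexity classes the theorem asserts: MTL over infinite timed words is undecidable (Alur--Henzinger), and over finite timed words it is decidable but with non-primitive-recursive complexity (Ouaknine--Worrell), not 2EXPTIME; for bounded-interval fragments the known bounds are typically stated as EXPSPACE rather than EXPTIME. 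So the embedding argument alone would not deliver matching lower bounds for either case. The paper's direct ATM reduction is tailored to hit EXPTIME exactly (acceptance of linearly bounded ATMs is APSPACE-complete, hence EXPTIME-complete), which is why it takes that route rather than the one you propose.
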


\begin{proof}
We begin by establishing the theoretical framework for model checking SMTL formulas through a series of constructions that build upon the stratified nature of the logic. Let $\mathcal{M} = (S, S_0, R, L)$ be our finite-state transition system, where $S$ represents the state space, $S_0$ the initial states, $R$ the transition relation, and $L$ the labeling function. At each abstraction level $k \leq K$, we define an abstract transition system $\alpha_k(\mathcal{M}) = (S_k, S_{0,k}, R_k, L_k)$. The states $S_k$ are obtained through the abstraction function $\alpha_k$ applied to $S$, with corresponding abstractions for initial states, transitions, and labels. The critical property of these abstractions is the preservation of the stratification hierarchy: for any levels $i < j$, we have $\alpha_i \circ \alpha_j = \alpha_i$.

The model checking algorithm proceeds through the construction of a region automaton $\mathcal{R}(\mathcal{M}, \phi)$. The states of this automaton are tuples $(s, \nu, \Phi)$, where $s$ represents a state of $\mathcal{M}$, $\nu$ captures clock valuations, and $\Phi$ maintains the set of subformulas requiring verification. Clock constraints are derived from the temporal bounds appearing in $\phi$, with $c_{\max}$ denoting the maximum such bound. For temporal subformulas at each abstraction level $k$, we introduce dedicated clocks $x_{\psi}$ for each subformula $L_k\psi$. The size of the resulting region automaton is bounded by $|S| \cdot (c_{\max}+1)^{|\text{clocks}|} \cdot 2^{|\phi|} \cdot K$. This bound reflects the product of the original state space size, the clock regions, the subformula combinations, and the abstraction levels. The verification process traverses the abstraction hierarchy from bottom to top. At each level $k$, we construct the corresponding abstract system $\alpha_k(\mathcal{M})$ and its region automaton $\mathcal{R}_k$. The satisfaction of formulas at level $k$ is evaluated within $\mathcal{R}_k$, with results propagating upward through the abstraction hierarchy. The consistency between levels is maintained through the abstraction functions.

For bounded-time formulas, the complexity analysis reveals an EXPTIME upper bound. This follows from the size of the region automaton and the complexity of standard CTL model checking techniques applied at each abstraction level. The total complexity is dominated by the term $O(K \cdot |S| \cdot (c_{\max}+1)^{|\text{clocks}|} \cdot 2^{|\phi|})$. For unbounded-time formulas, the necessity to handle infinite paths introduces an additional exponential factor, yielding 2EXPTIME complexity. To establish EXPTIME-hardness for the bounded case, we present a reduction from the acceptance problem for linearly bounded alternating Turing machines. Given such a machine $M$, we construct an SMTL formula $\phi_M$ that captures both the validity of configurations and the transition structure of $M$:

\[\phi_M = L_1\Box_{[0,T]}(\text{valid\_config}) \land \bigwedge_{q \in Q} L_2\Diamond_{[0,T]}(q \rightarrow \text{next\_config}(q))\]
Here, $T$ bounds the computation steps of $M$, while valid\_config and next\_config encode the machine's configuration space and transition function respectively. The construction ensures that $M$ accepts input $w$ if and only if $\mathcal{M}_w \models \phi_M$, where $\mathcal{M}_w$ encodes the input. The polynomial growth with respect to $K$ follows from the additive nature of abstraction levels. Each new level introduces one additional region automaton of size $O(|\mathcal{R}|)$ and requires consistency checks with the previous level, contributing $O(|\mathcal{R}|^2)$ to the complexity. Thus, the total complexity grows as $O(K \cdot |\mathcal{R}|^2)$. For practical applications, when dealing with bounded formulas having a fixed number of clocks and constants, the complexity reduces to $O(|S| \cdot K \cdot 2^{|\phi|})$. This bound suggests the feasibility of verification for realistic systems when employing incremental verification techniques that localize the impact of adding new abstraction levels.
\end{proof}

\begin{corollary}[Incremental Verification]
The addition of a new abstraction level to a verified system requires only the verification of properties specific to the new level and consistency checks with the adjacent level, enabling efficient incremental verification procedures.
\end{corollary}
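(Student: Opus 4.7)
The plan is to reuse the layered region-automaton construction from the preceding theorem and argue that the computation decomposes cleanly along the abstraction hierarchy. Concretely, suppose $\mathcal{M}$ has already been verified against a well-formed SMTL formula $\phi_K$ stratified across levels $1,\dots,K$, meaning the region automata $\mathcal{R}_1,\dots,\mathcal{R}_K$ have been built, the per-level satisfaction relations $\models_k$ have been computed on $\alpha_k(\mathcal{M})$, and the pairwise consistency checks $\alpha_{k+1}\circ\alpha_k = \alpha_{k+1}$ have been discharged. I would then introduce a new formula $\phi_{K+1} = \phi_K \wedge L_{K+1}\psi$ and show that certifying $\mathcal{M}\models\phi_{K+1}$ requires only (i) constructing $\mathcal{R}_{K+1}$ on $\alpha_{K+1}(\mathcal{M})$ and evaluating $\psi$ there, and (ii) discharging the single consistency obligation between levels $K$ and $K+1$.

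The first step is to invoke the monotonicity property of the abstraction hierarchy, $\alpha_i\circ\alpha_j = \alpha_i$ for $i<j$, to argue that the new abstraction function $\alpha_{K+1}$ composes correctly with every $\alpha_k$ below it. This makes transitive consistency with lower levels a derived consequence of the single adjacent-level consistency check, rather than an independent obligation. The second step is to observe, from the SMTL semantics clause $(\rho,i)\models_m L_k\varphi \iff (\rho,i)\models_k\varphi$, that the well-formedness restriction prevents any subformula of the previously-verified $\phi_K$ from referencing level $K+1$, so the satisfaction sets computed on $\mathcal{R}_1,\dots,\mathcal{R}_K$ remain unchanged. The third step is a complexity accounting that reads off from the main theorem's bound $O(K\cdot|\mathcal{R}|^2)$: the incremental cost is the single term corresponding to $k=K+1$, i.e.\ $O(|\mathcal{R}_{K+1}|)$ for the new region automaton plus $O(|\mathcal{R}_K|\cdot|\mathcal{R}_{K+1}|)$ for the adjacent consistency check, which together are additive rather than multiplicative in the existing work.

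The main obstacle I anticipate is formalizing \emph{what changes} when the newly added level also alters the effective temporal resolutions $\rho_k$ used in the stratified timed state sequence, since the Temporal Resolution Hierarchy theorem requires $\rho_k<\rho_l$ for $k<l$, and appending a finer or coarser level could in principle shift the interpretation of previously-evaluated $\Diamond_{[0,t]}$ subformulas. I would handle this by constraining the incremental step to add $\rho_{K+1}$ strictly above $\rho_K$ in the hierarchy (the natural case for coarser abstractions), so that no existing interval bound is invalidated; alternatively, I would explicitly identify the small syntactic class of subformulas whose satisfaction might need recomputation when a finer level is inserted. Once this is pinned down, the corollary reduces to the observation that the algorithm of the main theorem is itself structured as a bottom-up fold over the hierarchy, and appending a level amounts to one more fold step, yielding the claimed incremental verification procedure.
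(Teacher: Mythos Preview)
Your proposal is correct and follows the same underlying idea as the paper, namely that the bottom-up, per-level structure of the region-automaton construction in the main theorem makes the cost of each level additive, so appending level $K{+}1$ contributes only one new automaton and one adjacent consistency check. The paper, however, offers no separate proof for this corollary: it is stated as an immediate consequence of the sentence in the main theorem's proof observing that ``each new level introduces one additional region automaton of size $O(|\mathcal{R}|)$ and requires consistency checks with the previous level.'' Your write-up is considerably more careful than the paper's treatment---in particular, your use of monotonicity to collapse transitive consistency to a single adjacent check, and your discussion of what happens to the temporal-resolution hierarchy when a level is appended, go well beyond anything the paper addresses. These are genuine refinements rather than a different approach.
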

\vspace{-3 mm}
\section{Numerical Results}

To evaluate the effectiveness of SMTL in enhancing multi-agent coordination and safety, we conducted a series of simulations comparing the performance of agents operating under SMTL and MTL specifications. The experiments were designed to assess how the introduction of stratification in temporal logic influences agent behavior in terms of collision avoidance and path efficiency. In our experimental setup, agents navigated grid worlds of dimensions ranging from $5 \times 5$ to $150 \times 150$. Each grid represented a discrete environment in which agents moved from randomly assigned starting positions to specified target locations. Obstacles were randomly placed within the grids. Agents were programmed to move one unit per time step in any of the four cardinal directions, provided the destination cell was within grid boundaries and not occupied by an obstacle or another agent. The primary objective for each agent was to reach its goal while avoiding collisions with other agents and obstacles.

Under the MTL framework, agents were governed by temporal logic specifications that dictated their movement towards goals without consideration of other agents. The MTL specification used can be mathematically expressed as:

$$
\phi_{\mathrm{MTL}}=\square_{[0, T]} \text { (reachGoal} \rightarrow \diamond_{[0, T]} \text {atGoal})
$$
This formula specifies that globally, from time 0 to $T$, if an agent intends to reach its goal (reachGoal), then eventually, within time $T$, it must be at the goal position (atGoal). In contrast, agents operating under SMTL incorporated stratified reasoning to anticipate and avoid potential conflicts with other agents. The SMTL specification extended the MTL formula to include collision avoidance at a higher level of reasoning:

$$
\phi_{\mathrm{SMTL}}=L_1\left(\phi_{\mathrm{MTL}} \wedge \square_{[0, T]}\left(\forall j \neq i, \neg\left(\operatorname{pos}_i=\operatorname{pos}_j\right)\right)\right)
$$
Here, $L_1$ represents the stratification level where agents consider not only their goal-reaching objectives but also the positions of other agents to prevent collisions. The term $\square_{[0, T]}\left(\forall j \neq i, \neg\left(\operatorname{pos}_i=\operatorname{pos}_j\right)\right)$ ensures that at all times, agents avoid occupying the same position as any other agent. In this case study, the stratified nature matches the temporal stratification in SMTL through the separation of agent behaviors and constraints according to their temporal characteristics:

\begin{itemize}
    \item \textbf{Short-Term Temporal Constraints (Fine-Grained Temporal Level):} Agents using SMTL enforce collision avoidance at every time step. This is a temporal property that requires agents to ensure that they do not occupy the same position as any other agent. In SMTL, this corresponds to a higher-priority constraint specified at a lower stratification level (e.g., $L_1$), which operates at a finer temporal granularity. The logic enforces that safety properties hold at all times or within very short time intervals.
    \item \textbf{Long-Term Temporal Objectives (Coarse-Grained Temporal Level):} Agents aim to reach their designated goals, which is a temporal property considered over a longer time horizon. In SMTL, this is captured at a higher stratification level (e.g., $L_2$ or $L_3$), where the temporal properties involve longer intervals or eventualities.
\end{itemize}

\begin{figure}[t]
    \centering
    \subfigure[a]{
        \includegraphics[width=0.45\textwidth]{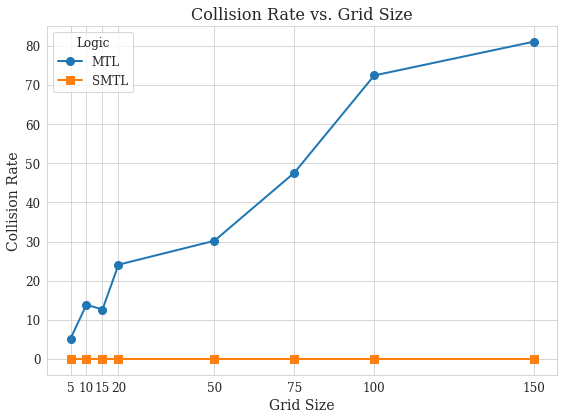}
    }
    \subfigure[b]{
        \includegraphics[width=0.45\textwidth]{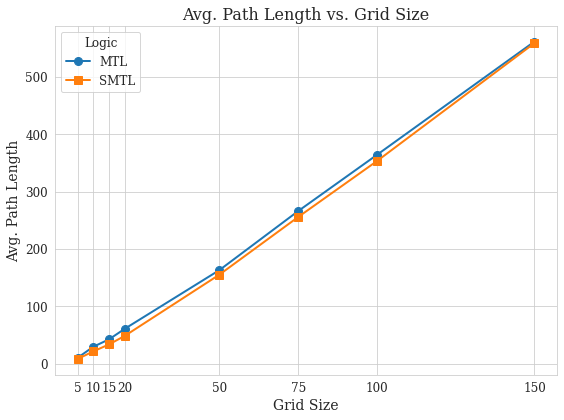}
    }
    \vspace{0.5cm} 
    \subfigure[c]{
        \includegraphics[width=0.45\textwidth]{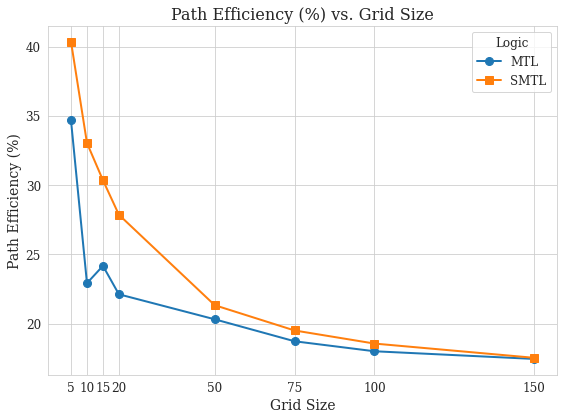}
    }
    \subfigure[d]{
        \includegraphics[width=0.45\textwidth]{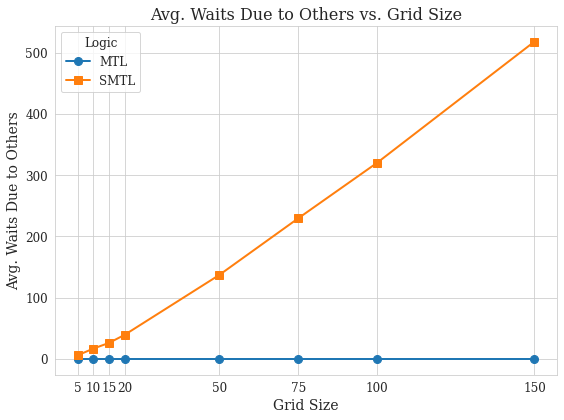}
    }
    \caption{Comparison of agent performance metrics under MTL and SMTL across different grid sizes: (a) collision rate, (b) average path length, (c) path efficiency, and (d) average waits due to others.}
    \label{fig:results}
\end{figure}

The experimental results presented in Figure \ref{fig:results} provide a comparison of agent performance under MTL and SMTL. For each grid size, the number of agents corresponds to the grid dimension, ensuring a consistent agent density across different environment scales. The key performance metrics evaluated include collision rate, average path length, path efficiency, and average waits due to others. A key observation from the data is the contrast in collision rates between the MTL and SMTL agents. MTL agents show a non-zero collision rate across all grid sizes, with the collision rate generally increasing as the grid size and the number of agents grow. Specifically, MTL agents experience an average collision rate of 5.20 collisions per agent in the $5 \times 5$ grid, which escalates to over 81 collisions per agent in the $150 \times 150$ grid. This trend underscores the limitations of MTL agents in avoiding collisions, particularly in larger environments where agent interactions are more frequent.

In contrast, SMTL agents maintain a collision rate of zero across all grid sizes, demonstrating their effectiveness in collision avoidance regardless of the environment's scale. This result highlights the efficacy of the stratification approach in SMTL, which allows agents to prioritize safety and coordination over individual objectives. The ability of SMTL agents to completely avoid collisions, even in densely populated and extensive grids, signifies a substantial improvement over traditional MTL agents and validates the practical utility of SMTL in multi-agent systems. Analyzing the average path length reveals that SMTL agents often have shorter or comparable path lengths compared to MTL agents, especially in smaller grids. For example, on the $5 \times 5$ grid, SMTL agents achieve an average path length of 8.40 units, whereas MTL agents average 10.47 units. This suggests that SMTL agents are capable of reaching their goals more efficiently in less congested environments. However, as the grid size increases, the difference in average path length between SMTL and MTL agents diminishes. In the $150 \times 150$ grid, SMTL agents have an average path length of approximately 558.57 units compared to 561.24 units for MTL agents. This convergence indicates that in larger environments, both agent types require similar path lengths to reach their goals, likely due to the proportional increase in distances that need to be traversed. The path efficiency metric, calculated as the ratio of the shortest possible path length to the actual path length, provides insight into the agents' navigational effectiveness relative to the optimal path. SMTL agents consistently show higher path efficiency percentages than MTL agents across all grid sizes. In the $5 \times 5$ grid, SMTL agents attain a path efficiency of approximately 40.37\%, surpassing the 34.68\% efficiency of MTL agents. Although both agent types experience a decline in path efficiency as the grid size increases—reflecting the greater complexity and potential detours in larger environments—SMTL agents maintain a slight advantage. This efficiency suggests that SMTL agents effectively balance the need for collision avoidance with the pursuit of efficient paths. A notable distinction between the two agent types is observed in the average waits due to others. SMTL agents exhibit a substantial number of waits, which increases with grid size—from an average of 5.93 waits in the $5 \times 5$ grid to over 517 waits in the $150 \times 150$ grid. This behavior reflects the SMTL agents' strategy of waiting to avoid potential collisions, emphasizing their prioritization of safety and adherence to collision avoidance constraints. In contrast, MTL agents have zero waits across all grid sizes, indicating that they proceed toward their goals without regard for other agents, even if it results in collisions. The willingness of SMTL agents to wait highlights their commitment to coordination and safe operation within shared environments.
\begin{wrapfigure}{r}{0.35\textwidth}
  \centering
  \includegraphics[width=0.35\textwidth]{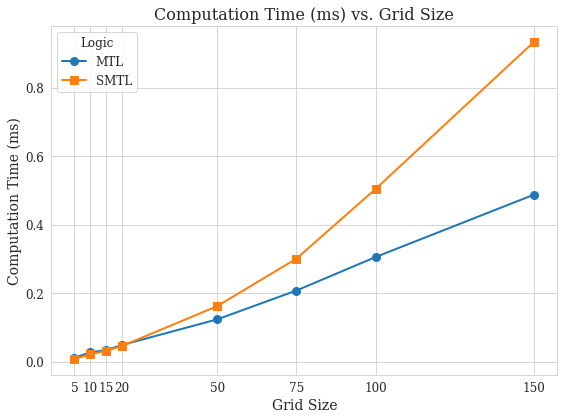} 
  \caption{Comparison of computation time per step between two agents}
  \label{fig:example}
\end{wrapfigure}
Computation time per step per agent is another metric where SMTL agents demonstrate competitive performance. While SMTL agents incur slightly higher computation times compared to MTL agents, the difference is marginal. For example, in the $150 \times 150$ grid, SMTL agents have an average computation time of approximately 0.935 milliseconds per step, compared to 0.489 milliseconds for MTL agents. This modest increase can be attributed to the additional processing required for collision avoidance and coordination with other agents. The results indicate that the coordination achieved by SMTL agents do not come at the expense of prohibitive computational costs.

\vspace{-3 mm}
\section{Conclusion}

In this paper, we proposed SMTL as a formalism for specifying and verifying properties of complex cyber-physical systems exhibiting behaviors across multiple temporal and abstraction scales. By incorporating a stratification operator into traditional temporal logics, SMTL enables the association of temporal properties with specific abstraction levels, facilitating the natural expression of multi-scale requirements while preserving formal reasoning about inter-level relationships. We formalized the syntax and semantics of SMTL, proving that it strictly subsumes MTL and offers enhanced expressiveness by capturing properties unattainable in existing logics. Through numerical simulations, we demonstrated the practical utility of SMTL in enhancing agent coordination and safety. The simulations comparing agents operating under MTL and SMTL specifications revealed that SMTL significantly reduces collision rates without incurring substantial computational overhead or compromising path efficiency. Specifically, SMTL agents maintained a zero collision rate in various grid sizes, highlighting the effectiveness of stratified reasoning in avoiding conflicts and improving overall system reliability. Our findings underscore the potential of SMTL as a valuable tool for designing and verifying complex multi-agent systems operating across diverse temporal and abstraction scales. 

\noindent {\textbf{Future Work.}} Future research may explore the integration of SMTL with automated synthesis tools to generate controllers that guarantee compliance with multi-scale specifications. Furthermore, investigating the application of SMTL in other domains, such as distributed sensor networks or human-robot interaction, could further validate its effectiveness. Extending SMTL to incorporate probabilistic reasoning or learning-based components may also enhance its applicability to systems with inherent uncertainties. Finally, the development of optimized algorithms for the checking of the SMTL model can facilitate its adoption in large-scale industrial applications.

\bibliographystyle{splncs04}
\bibliography{ref,ref2}

\begin{thebibliography}{10}
\providecommand{\url}[1]{\texttt{#1}}
\providecommand{\urlprefix}{URL }
\providecommand{\doi}[1]{https://doi.org/#1}

\bibitem{alur1993real}
Alur, R., Henzinger, T.A.: Real-time logics: complexity and expressiveness. Information and Computation  \textbf{104}(1),  35--77 (1993)

\bibitem{baheri2023exploring}
Baheri, A.: Exploring the role of simulator fidelity in the safety validation of learning-enabled autonomous systems. AI Magazine  \textbf{44}(4),  453--459 (2023)

\bibitem{baheri2023safety}
Baheri, A.: Safety validation of learning-based autonomous systems: A multi-fidelity approach. In: Proceedings of the AAAI Conference on Artificial Intelligence. vol.~37, pp. 15432--15432 (2023)

\bibitem{bartocci2018specification}
Bartocci, E., Deshmukh, J., Donz{\'e}, A., Fainekos, G., Maler, O., Ni{\v{c}}kovi{\'c}, D., Sankaranarayanan, S.: Specification-based monitoring of cyber-physical systems: a survey on theory, tools and applications. Lectures on Runtime Verification: Introductory and Advanced Topics pp. 135--175 (2018)

\bibitem{beard2022safety}
Beard, J.J., Baheri, A.: Safety verification of autonomous systems: A multi-fidelity reinforcement learning approach. arXiv preprint arXiv:2203.03451  (2022)

\bibitem{belta2019formal}
Belta, C., Sadraddini, S.: Formal methods for control synthesis: An optimization perspective. Annual Review of Control, Robotics, and Autonomous Systems  \textbf{2}(1),  115--140 (2019)

\bibitem{cucala2021stratified}
Cucala, D.J.T., Wa{\l}{\k{e}}ga, P.A., Grau, B.C., Kostylev, E.: Stratified negation in datalog with metric temporal operators. In: Proceedings of the AAAI Conference on Artificial Intelligence. vol.~35, pp. 6488--6495 (2021)

\bibitem{eappenscaling}
Eappen, J., Xiong, Z., Patel, D., Bera, A., Jagannathan, S.: Scaling safe multi-agent control for signal temporal logic specifications. In: CoRL Workshop on Safe and Robust Robot Learning for Operation in the Real World

\bibitem{endriss2005temporal}
Endriss, U.: Temporal logics for representing agent communication protocols. In: International Workshop on Agent Communication. pp. 15--29. Springer (2005)

\bibitem{fainekos2006robustness}
Fainekos, G.E., Pappas, G.J.: Robustness of temporal logic specifications. In: International Workshop on Formal Approaches to Software Testing. pp. 178--192. Springer (2006)

\bibitem{felli2021smt}
Felli, P., Gianola, A., Montali, M.: Smt-based safety checking of parameterized multi-agent systems. In: Proceedings of the AAAI Conference on Artificial Intelligence. vol.~35, pp. 6321--6330 (2021)

\bibitem{fisher2005temporal}
Fisher, M.: Temporal development methods for agent-based. Autonomous Agents and Multi-Agent Systems  \textbf{10},  41--66 (2005)

\bibitem{kantaros2020stylus}
Kantaros, Y., Zavlanos, M.M.: Stylus*: A temporal logic optimal control synthesis algorithm for large-scale multi-robot systems. The International Journal of Robotics Research  \textbf{39}(7),  812--836 (2020)

\bibitem{kress2018synthesis}
Kress-Gazit, H., Lahijanian, M., Raman, V.: Synthesis for robots: Guarantees and feedback for robot behavior. Annual Review of Control, Robotics, and Autonomous Systems  \textbf{1}(1),  211--236 (2018)

\bibitem{lamport1994temporal}
Lamport, L.: The temporal logic of actions. ACM Transactions on Programming Languages and Systems (TOPLAS)  \textbf{16}(3),  872--923 (1994)

\bibitem{leahy2022fast}
Leahy, K., Jones, A., Vasile, C.I.: Fast decomposition of temporal logic specifications for heterogeneous teams. IEEE Robotics and Automation Letters  \textbf{7}(2),  2297--2304 (2022)

\bibitem{momtaz2023monitoring}
Momtaz, A., Abbas, H., Bonakdarpour, B.: Monitoring signal temporal logic in distributed cyber-physical systems. In: Proceedings of the ACM/IEEE 14th International Conference on Cyber-Physical Systems (with CPS-IoT Week 2023). pp. 154--165 (2023)

\bibitem{nilsson2016control}
Nilsson, P., Ozay, N.: Control synthesis for large collections of systems with mode-counting constraints. In: Proceedings of the 19th international conference on hybrid systems: Computation and control. pp. 205--214 (2016)

\bibitem{sahin2017provably}
Sahin, Y.E., Nilsson, P., Ozay, N.: Provably-correct coordination of large collections of agents with counting temporal logic constraints. in 2017 acm/ieee 8th international conference on cyber-physical systems (iccps) (2017)

\bibitem{shahrooei2022falsification}
Shahrooei, Z., Kochenderfer, M.J., Baheri, A.: Falsification of learning-based controllers through multi-fidelity {B}ayesian optimization. European Control Conference (ECC)  (2023)

\bibitem{shahrooei2024optimizing}
Shahrooei, Z., Kochenderfer, M.J., Baheri, A.: Optimizing falsification for learning-based control systems: A multi-fidelity bayesian approach. arXiv preprint arXiv:2409.08097  (2024)

\bibitem{sharon2015conflict}
Sharon, G., Stern, R., Felner, A., Sturtevant, N.R.: Conflict-based search for optimal multi-agent pathfinding. Artificial intelligence  \textbf{219},  40--66 (2015)

\bibitem{xu2016census}
Xu, Z., Julius, A.A.: Census signal temporal logic inference for multiagent group behavior analysis. IEEE Transactions on Automation Science and Engineering  \textbf{15}(1),  264--277 (2016)

\bibitem{yoo2015control}
Yoo, C., Belta, C.: Control with probabilistic signal temporal logic. arXiv preprint arXiv:1510.08474  (2015)

\end{thebibliography}

\end{document}